\documentclass[11pt]{article}
\usepackage{amsmath,amsthm,amsfonts,authblk}
\usepackage{amssymb,graphicx,verbatim,multirow}
\usepackage[round]{natbib}
\usepackage{color}


{\bf}
{\bf}

\def\d{\delta}

\def\d{\delta}

\parindent 1cm
\parskip 0.6 cm
\setlength{\textwidth}{6.5 in} \setlength{\oddsidemargin}{0.0 in}
\setlength{\evensidemargin}{0.0 in} \setlength{\topmargin}{-0.75 in}
\setlength{\textheight}{9 in}

\newtheorem{theorem}{Theorem}
\newtheorem{proposition}[theorem]{Proposition}

\title{Piecewise linear models of chemical reaction networks}
\author[a,c]{Ajit Kumar\footnote{Correspondence Author \\
Ajit Kumar (\texttt{ajit.kumar@snu.edu.in}), Kre\v{s}imir Josi\'{c} (\texttt{josic@math.uh.edu}) }}

\author[a,b]{Kre\v{s}imir Josi\'{c}}

\affil[a]{Department of Mathematics, University of Houston, Houston, Texas 77204-3008, USA}
\affil[b]{Department of Biology and Biochemistry, University of Houston, Houston, Texas 77204-5001, USA}
\affil[c]{Department of Mathematics, Shiv Nadar Univerisity, Greater Noida, Uttar Pradesh 203207, India}

\begin{document}
\maketitle
\abstract{We show that certain non-linear dynamical systems
with non-linearities in the form of Hill functions, can
be approximated by piecewise linear dynamical systems.  The resulting
piecewise systems have closed form solutions that can be used to 
understand the behavior of the fully nonlinear system.  We justify the reduction using
geometric singular perturbation theory, and illustrate the results in networks modeling a genetic switch and a genetic oscillator.}


\section{Introduction}

Accurately describing the behavior of interacting enzymes, proteins, and genes
requires spatially extended stochastic models.  However, such models are
difficult to implement and fit to data, hence modelers frequently  use tractable reduced models. 
In most popular models of biological networks, the 
dynamics of each node  is described by a single
ODE, and sigmoidal functions are used to model interactions between the network
elements.  The resulting ODEs are generally not analytically tractable.
This can hinder the study of  large networks, where the number of parameters and
 the potential dynamical complexity make it difficult to analyze the behavior of the system using
purely
 numerical methods.

Analytical treatments are possible in certain limits.  For instance, the
approaches that have been developed  to analyze models of gene interaction
networks can be broadly classified into three
categories~[\cite{PolynikisHoganBernardo2009}]:  \emph{Quasi Steady State Approximations} (QSSA), \emph{Piecewise Linear Approximations} (PLA), and
 \emph{discretization of continuous time
ODEs}.

Here, we aim to develop the theory of PLAs.  In
certain limits interactions between network elements become
switch--like~[\cite{kauffman69,Alon2006,DavitichBornholdt2008}].   For instance,
the Hill function,   $f(x) = x^n / (x^n + J^n )$,  approaches the Heaviside function, $H(x - J)$,
in the limit of large $n$.
In this limit the domain on which the network is modeled
is also naturally broken into subdomains.  For Hill functions, the thresholds,
defined by $J$, divides the domain into two subdomains within which
the Heaviside function is constant.  Thus within each subdomain a node is either
fully expressed, or
not expressed at all.  The original Hill function, $f(x)$, is approximately constant in each
of the subdomains, and boundary layers occur when $x$ is
close to threshold~[\cite{IroniPanzeriPlahteSimoncini2011}].

This general approach has a long and rich history, and piecewise linear
functions of the form proposed in~[\cite{GlassKauffman1973}] have been shown to
be well suited for the modeling of genetic regulatory networks (for a brief
review see~[\cite{de02}]).  In certain cases
the results can be justified rigorously.  In particular,  singular perturbation
theory  can
be used to obtain reduced equations within each subdomain and the boundary
layers,
and global approximations   within the entire
domain~[\cite{IroniPanzeriPlahteSimoncini2011}].

Here we take a similar approach, but work in a different limit.  We again start
with the Hill function, $ x^n / (x^n + J^n )$, but assume that $J$ is small. 
Although the subsequent results hold for any fixed $n$, for simplicity we
assume $n = 1$.
 Equations involving this special class of Hill functions are known as
Michaelis-Menten equations, and $J$ is known as the Michaelis-Menten 
constant~[\cite{MichaelisMenten1913,GoldbeterKoshland1981,
CilibertoFabrizioTyson2007,ChaoTang2009,DavitichBornholdt2008,Goldbeter1991,NovakTyson1993,
NovakPatakiCilibertoTyson2001,NovakPatakiCilibertoTyson2003}].  
We note that the models of chemical reactions we consider can be rigorously derived from
the Chemical Master Equation only in the case of a single reaction~[\cite{KumarJosic2011}].
The models of networks of chemical reactions that we take as the starting point of
our reduction should therefore be regarded as phenomenological.

We will examine the case when the Michaelis-Menten constant, $J$, is small. This case
 has a simple physical
interpretation: Consider the Hill function that occurs in the Michaelis-Menten
scheme, where an enzyme is catalyzing the conversion of the inactive form of some
protein to its active form. When $J$ is small
the total enzyme concentration is much smaller than the total protein
concentration.
The asymptotic limit $J \rightarrow 0$ was recently considered to obtain 
heuristically
a Boolean approximation of a protein interaction
network~[\cite{DavitichBornholdt2008}].  Here we consider a rigorous
justification underlying such reductions, as well as how the reduction could
be used to understand the dynamics of gene networks.

The main idea behind the reduction we propose can be summarized as follows:
Given the non-linear term $f(x) = x / (x + J )$, when $x
\gg J $ then $f(x) \approx 1$, and when $x \approx 0$ then  we do the analysis by introducing a new variable like $\tilde{x} := J/x$. This new variable $\tilde{x}$ serves as a microscope to observe the boundary regions.  As we will show, the domain is naturally decomposed  into a nested sequence 
of hypercubes such that for each level of
nesting we get a separate linear equation. 
%


We proceed as follows: In Section~\ref{ExampleProblems} we illustrate our
approach using simple examples and provide numerical evidence for the validity
of our claim. In Section~\ref{GeneralTheory} we describe a general class of
differential equations which subsumes these examples.
 Furthermore, in this section we justify our approach mathematically using
Geometric Singular Perturbation Theory (GSPT). We will conclude with a
discussion on limitations of these reductions.

\section{Example problems}\label{ExampleProblems}

We start by demonstrating the main idea of our approach in the cases
of two and three mutually repressing biological elements. For instance, these elements
could be genes that mutually inhibit each other's
production~[\cite{GardnerCantorCollins2000,ElowitzLeibler2000}].  
However, as
the theory we develop is general, we do not constrain it to a particular interpretation.
We first provide an intuitive illustration of the approach along with a
heuristic justification of the different steps in the
reduction. A  mathematical justification follows.

\begin{figure}[t]
\begin{center}
\includegraphics[scale = .5]{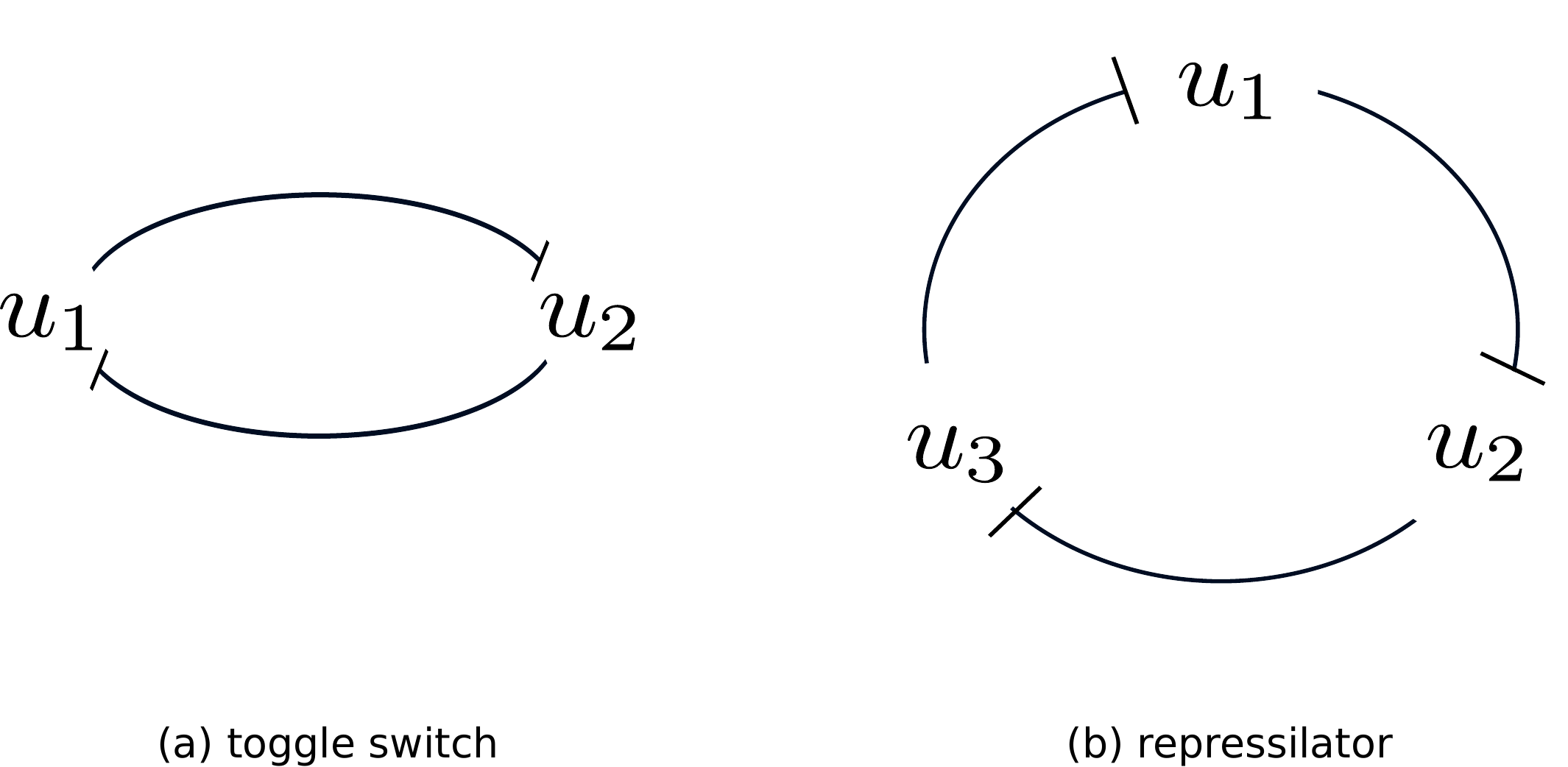}
\parbox{.9\textwidth}{
\caption{\footnotesize {(a) Nodes $u_1$, $u_2$ inhibiting each others activity. End result is like a switch. The node which was stronger in the beginning  will stay stronger and will completely suppress the other. (b) Nodes $u_1$, $u_2$, and $u_3$ suppressing each other in a cyclic fashion. Not surprisingly, the end result is oscillatory behavior.}%
\label{fig:toggle_repressillator}
}}
\end{center}
\end{figure}

\subsection{A network of two mutually inhibiting elements}

First we consider two mutually repressing elements within a biological network. 
This  \emph{toggle switch}  motif (see Figure~\ref{fig:toggle_repressillator}a) 
is common in biological
networks~[\cite{NovakPatakiCilibertoTyson2003,GardnerCantorCollins2000}]. Let
$u_1,u_2 \in [0,1]$ represent the normalized levels of activity of the first and
second element, respectively. Therefore, when $u_i = 1$ the $i^{\text{th}}$
network element is maximally active (expressed). The system be modeled by
\begin{align}\label{EquationForToggleSwitch}
\begin{split}
\frac{du_1}{dt} &= 0.5\frac{1-u_1}{J+1-u_1}-u_2\frac{u_1}{J+u_1},  \\
\frac{du_2}{dt} &= 0.5\frac{1-u_2}{J+1-u_2}-u_1\frac{u_2}{J +u_2},
\end{split}
\end{align}
where $J$ is some positive constant. The structure of
Eq.~(\ref{EquationForToggleSwitch}) implies that the cube $[0,1]^2 = \{(u_1, u_2)
\,|\, 0 \le u_1, u_2 \le 1 \}$ is invariant (see
Proposition~\ref{prop:invariantCube}).

In the limit of small $J$, Eq.~(\ref{EquationForToggleSwitch})  can be
approximated by a piecewise linear differential equation as follows:
If $u_i$ is not too close to zero the expression $u_i/(J+u_i)$ is approximately unity.
  More precisely, we fix a small $\d > 0 $, which will be chosen to depend on
$J$.  When $u_i > \d$ and $J$ is small then $u_i/(J+u_i) \approx1$. Similarly, when 
$u_i > 1-\d$ then  $(1-u_i)/(J+1-u_i) \approx 1$.

With this convention in mind we  break the cube $[0,1]^2$ into several
subdomains, and define
a different reduction of Eq.~\eqref{EquationForToggleSwitch} within each.   For
example,  the interior of the domain $[0,1]^2$ is defined by
\begin{align}\label{R00}
\mathcal{R}^0_0:= \{(u_1,u_2) \in [0,1]^2 \,|\, \d \le u_1 \le 1-\d \text{ and }
\d \le u_2 \le 1-\d\}.
\end{align}
 Eq.~(\ref{EquationForToggleSwitch}), restricted to $\mathcal{R}^0_0$ is
approximated by the linear differential equation
\begin{align}\label{interiorLinear2d}
\frac{du_1}{dt} = 0.5-u_2, \quad
\frac{du_2}{dt} = 0.5-u_1.
\end{align}
%
On the other hand, if one of the coordinate is near the boundary, while the
other is in the interior, the approximation is different. For instance, the
region
\begin{align}\label{R10}
\mathcal{R}_1^0:= \{(u_1,u_2) \in [0,1]^2 \,|\,  u_1 < \d \text{ and } \d \le
u_2 \le 1-\d\},
\end{align}
forms a boundary layer where $u_1$ is of the same order as $J$. Therefore 
the term $u_1/(J + u_1)$ can not be approximated by unity.  Instead the approximation takes the form
\begin{subequations}\label{edgeLinear2d}
\begin{align}
\frac{du_1}{dt} &= 0.5-u_2 \frac{u_1}{J + u_1}, \label{edgeLinear2da} \\
\frac{du_2}{dt} &= 0.5-u_1. \label{edgeLinear2db}
\end{align}
\end{subequations}
This equation can be simplified further.  Since the boundary defined by $u_1= 
0$ is invariant,  $\frac{d u_1}{dt}$ must be small inside the boundary layer
$\mathcal{R}_1^0$.  We therefore use the approximations $\frac{d u_1}{dt}
\approx 0$ in Eq.~(\ref{edgeLinear2da}) and $u_1 \approx 0$ in 
Eq.~(\ref{edgeLinear2db}) to obtain
\begin{subequations}\label{smallu}
\begin{align}
0 &= 0.5-u_2 \frac{u_1}{J + u_1}, \label{smallua}\\
\frac{du_2}{dt} &= 0.5. \label{smallub}
\end{align}
\end{subequations}
Note that Eq.~(\ref{smallub}) is linear and decoupled from Eq.~(\ref{smallua}),
while Eq.~(\ref{smallua}) is an algebraic system which can be solved to obtain
$u_1 \approx J/(2u_2 - 1)$.
Within $\mathcal{R}_1^0$ we thus obtain the approximation $u_2(t) \approx 0.5 t 
+ u_2(0)$ and
$u_1(t) \approx J/( t + 2 u_2(0) - 1)$.

Note that here we have the freedom of only specifying
the initial condition $u_2(0)$, while $u_1(0)$ is determined from the solution
of the algebraic equation~\eqref{smallua}.  As we explain below, this algebraic
equation defines a slow manifold within the subdomain $\mathcal{R}_1^0$.  The
reduction assumes that solutions are instantaneously attracted to this manifold.
 %
%

Table~\ref{table:2d} shows how these ideas can be extended to all of $[0,1]^2$. 
In each of the 9 listed subdomain one or both variables are close to either 0 or
1.
Therefore each subdomain corresponds to either the interior, edge, or corner of
the unit square. Following the preceding arguments, we assume that variable(s)
that are close to 0 or 1 are
  in steady state and lead to an algebraic equation.  Similarly, the evolution
of the interior variables is described by linear differential equations.
The resulting algebraic-differential systems are given in the last column of
Table~\ref{table:2d}.

The reductions in the corner subdomains 
$\mathcal{R}_{1,2}^0,\mathcal{R}_0^{1,2}, \mathcal{R}_1^2,$
and $\mathcal{R}_2^1$ consist of purely algebraic equations.  When $J$ is small
some of these
equations will have a solution in $[0,1]^2$, indicating a stable fixed point
near the corresponding corner ($\mathcal{R}_1^2$ and $\mathcal{R}_2^1$).  Others
will not have a solution in $[0,1]^2$, indicating that approximate solutions do
not enter the corresponding subdomain ($\mathcal{R}_{1,2}^0$ and
$\mathcal{R}_0^{1,2}$).

 Each approximate solution has the potential of exiting the subdomain within
which it is defined, and entering another.
The global approximate solution of Eq.~(\ref{EquationForToggleSwitch}) is 
obtained by using the exit point from one subdomain as the initial condition for
the approximation in the next.  In
subdomains other than $\mathcal{R}_0^0$ some of the initial conditions will be
prescribed by the algebraic part of the reduced system.  The global
approximation may therefore be discontinuous, as solutions entering a new
subdomain are assumed to instantaneously jump to the slow manifold defined by
the algebraic
part of the reduced system.
 Fig.~\ref{fig:twoDimension} shows that when $J$ is small, this approach
provides a good approximation.

\begin{table}[t]
$$
\begin{array}{c|c|c|rcl}
\hline
\text{Subdomain's name}		&  u_1 	& 	u_2	&
\multicolumn{3}{c}{\text{Approximating linear system}}  \\
\hline
\multirow{2}{*}{$\mathcal{R}_0^0$}	&  \multirow{2}{*}{$ \d \le u_1 \le 1-\d
$}	&  \multirow{2}{*}{ $ \d \le u_2 \le 1-\d $}	&   	 u_1'	&=&
0.5-u_2,			\\
								&		
					&					
  		 &    u_2'		&=& 0.5-u_1			\\
\hline
\multirow{2}{*}{$\mathcal{R}_0^1$}	& \multirow{2}{*}{$ u_1 > 1-\d $} 	
& \multirow{2}{*}{$ \d \le u_2 \le 1-\d $}		&  	           	
  0	&=&  \displaystyle 0.5\frac{1-u_1}{J+1-u_1}-u_2,	\\
								&		
					&					
		 &  	 u_2'  &=& -0.5				       \\
\hline
\multirow{2}{*}{$ \mathcal{R}_0^2$}     & \multirow{2}{*}{$ \d \le u_1 \le 1-\d
$}		& \multirow{2}{*}{$ u_2 > 1-\d $}	&  u_1'	&=&	 -0.5,	
				\\
								&		
					&					
	  	 &                         0	&=&	\displaystyle
0.5\frac{1-u_2}{J+1-u_2}-u_1	\\
\hline
\multirow{2}{*}{$ \mathcal{R}_1^0 $}	& \multirow{2}{*}{$ u_1 < \d $} 
& \multirow{2}{*}{$ \d \le u_2 \le 1-\d $}    	& 			
0 	 &=& \displaystyle 0.5-u_2\frac{u_1}{J+u_1},  \\
								&		
					&					
	   	 & 		u_2'&=& 0.5  \\
\hline
\multirow{2}{*}{$ \mathcal{R}_2^0	$}	& \multirow{2}{*}{$ \d \le u_1
\le 1-\d $}		&   \multirow{2}{*}{$u_2 <  \d $}	& 	u_1'	
&=& 0.5 ,  \\
								&		
					&					
		 &			0		&=& \displaystyle 0.5
-u_1\frac{u_2}{J +u_2} \\
\hline
\multirow{2}{*}{$ \mathcal{R}_0^{12}$}& \multirow{2}{*}{$ u_1 > 1-\d$} 	& 
\multirow{2}{*}{$u_2 > 1-\d$} 	&	0	&=& \displaystyle
0.5\frac{1-u_1}{J+1-u_1}-1,	\\
								&		
					&					
		 &	0	&=& \displaystyle 0.5\frac{1-u_2}{J+1-u_2}-1
\\
\hline
\multirow{2}{*}{$ \mathcal{R}^0_{12}$}&  \multirow{2}{*}{$u_1 < \d$}	 & 
\multirow{2}{*}{$u_2 < \d$}	&	0	&=& \displaystyle 0.5- J
\frac{u_1}{J+u_1},		\\
								&		
					&					
		 &	0	&=& \displaystyle 0.5- J\frac{u_2}{J +u_2}	
\\
\hline
\multirow{2}{*}{$ \mathcal{R}_2^1	$}	& \multirow{2}{*}{$ u_1 > 1-\d$}
	& \multirow{2}{*}{$u_2 < \d $}		&	0	&=&
\displaystyle  0.5\frac{1-u_1}{J+1-u_1},	\\
								&		
					&					
		 &	0	&=& \displaystyle 0.5- \frac{u_2}{J +u_2}	
\\
\hline
\multirow{2}{*}{$ \mathcal{R}_1^2	$}	& \multirow{2}{*}{$ u_1 < \d $}
&\multirow{2}{*}{$  u_2 > 1-\d$}		&	0	&=&
\displaystyle 0.5-\frac{u_1}{J+u_1},		\\
								&		
					&					
		 &	0	&=& \displaystyle 0.5\frac{1-u_2}{J+1-u_2}
\\
\hline
\end{array}
$$
\begin{center}
\parbox{.8\textwidth}{
\caption{\footnotesize  List of differential--algebraic systems that approximate
Eq.~\eqref{EquationForToggleSwitch} in different parts of the domain. The
subdomains are named so that the superscript (subscript)  lists the coordinates
that are close to $1$ (close to 0), with 0 denoting the empty set.  For example,
$\mathcal{R}_1^2$ denotes that subdomain with $u_1 \approx 1$ and $u_2 \approx
0$, and $\mathcal{R}_0^2$ the subdomain where $u_2$ is near $1$, but $u_1$ is
away from the boundary. The middle column define the subdomain explicitly.   
The right column gives the differential-algebraic system that approximates
Eq.~\eqref{EquationForToggleSwitch}  within the given subdomain. }
\label{table:2d}
}
\end{center}
\end{table}
%


%
\begin{figure}[t]
\begin{center}
\includegraphics[scale = .4]{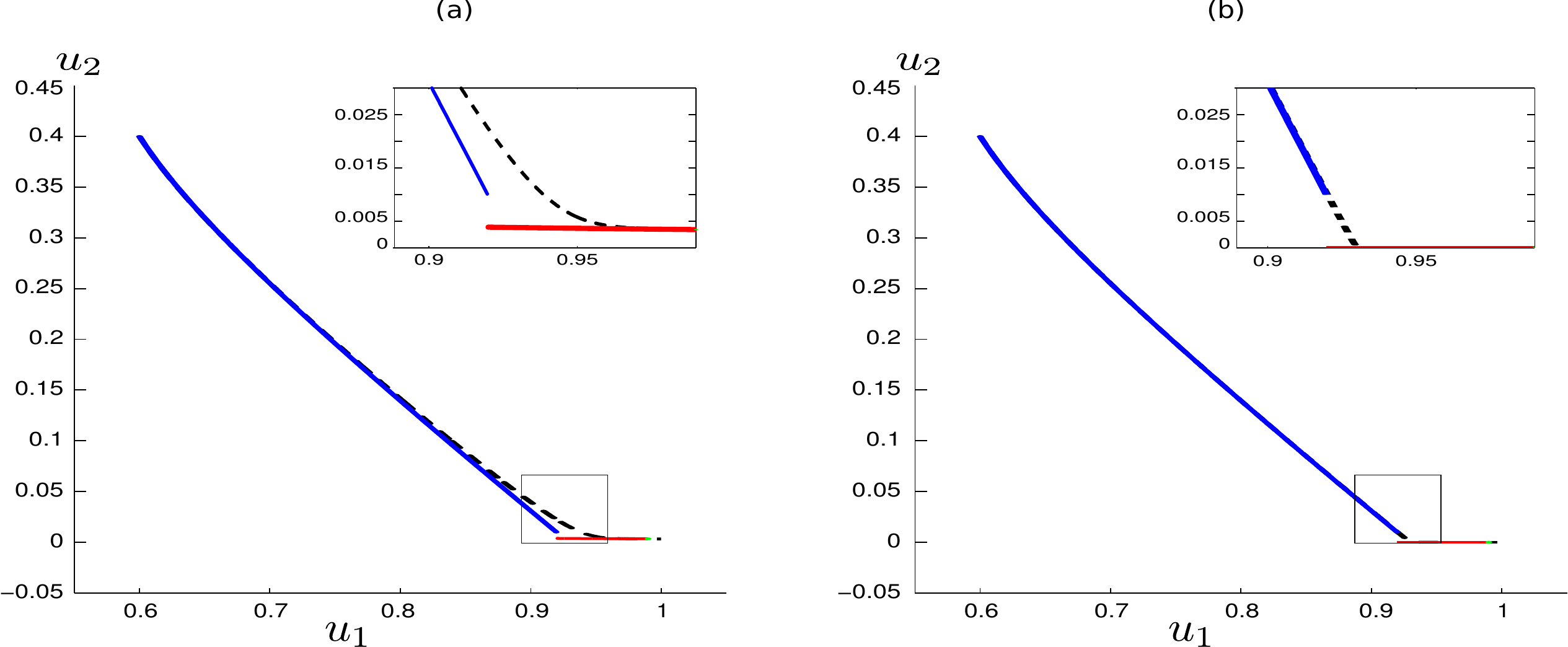}
\parbox{.9\textwidth}{
\caption{\footnotesize
Comparison of the numerical solution of Eq.~(\ref{EquationForToggleSwitch})
(dashed black) and the solution of the approximate system as listed in
Table~\ref{table:2d} (solid colored) for two different values of $J$ ( We used 
$J = 10^{-2}$ in (a); and $J =10^{-4} $ in (b).).  The different colors denote
the switching behavior of the solution from one subdomain to next. We used $\d =
0.01$. Solution of the linear approximation started in the subdomain
$\mathcal{R}_0^0$ (Initial value: $u_1 = 0.6, u_2 = 0.4$), and as soon as $u_2$
became smaller than $\d$, the subdomain switched to  $\mathcal{R}_2^0$ and
driving  linear differential equation also switched accordingly. It should be
noted that the approximate solution is discontinuous. The reason is that as soon
as the solution crossed the horizontal line, $u_2 = \d$, the solution jumped
(see inset) to the manifold, described by the algebraic part of the linear
differential algebraic system prevalent in the subdomain $\mathcal{R}_2^0$.
The solution finally stopped in the  subdomain $\mathcal{R}_2^1$.  }
\label{fig:twoDimension}
}
\end{center}
\end{figure}

\subsection{A network of three mutually inhibiting elements}

The same reduction can be applied to systems of arbitrary dimension.  As an
example consider
the
\emph{repressilator}~[\cite{NovakPatakiCilibertoTyson2003,ElowitzLeibler2000}]
described by
\begin{eqnarray}\label{threeNodeMM}
 \frac{du_1}{dt} &=& 0.6\frac{1- u_1}{J+1-u_1} -  u_3\frac{ u_1}{J+u_1}, \notag
\\
 \frac{du_2}{dt} &=& 0.4\frac{1- u_2}{J+1-u_2} -  u_1\frac{ u_2}{J+u_2}, \\
 \frac{du_3}{dt} &=& 0.3\frac{1- u_3}{J+1-u_3} -  u_2\frac{ u_3}{J+u_3}. \notag
\end{eqnarray}
%
The cyclic repression of the three elements in this network leads to oscillatory
solutions over a large range of values of $J$.  The domain of this system,
$[0,1]^3$, can
be divided into 27 subdomains: 1 interior,
6 faces, 12 edges, and 8 vertices.  We
can again  approximate Eq.~\eqref{threeNodeMM} with solvable
differential--algebraic equation within each  subdomain, to obtain a global
approximate solution. We demonstrate the
validity of this approximation  in  Fig.~\ref{fig:threeDimension}. Note that 
both the numerically obtained solution to  Eq.~\eqref{threeNodeMM}, and its
approximation
exhibit oscillations, and that the approximation is discontinuous.

\begin{figure}[t]
\begin{center}
\includegraphics[width = .9\textwidth]{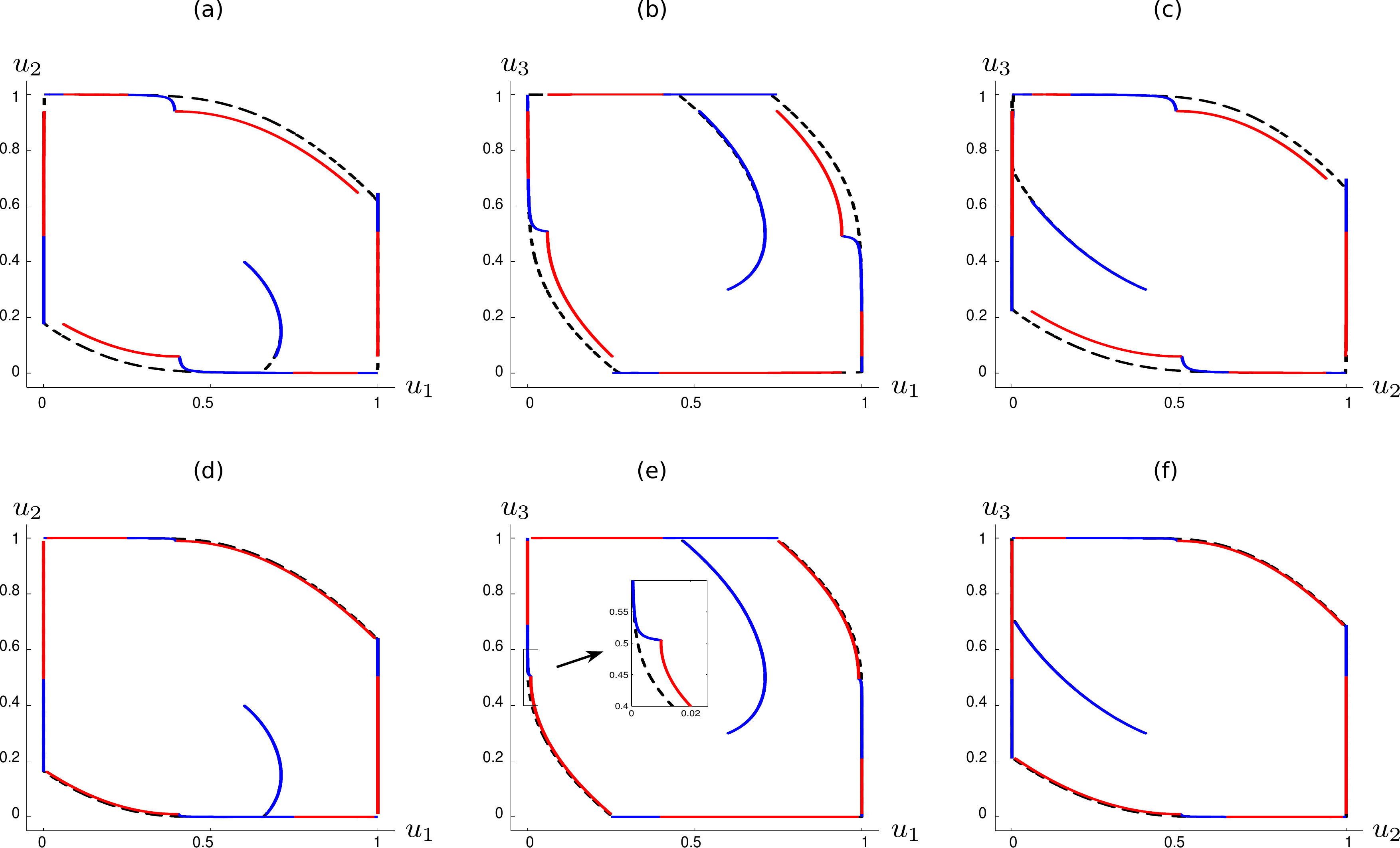}
\parbox{.9\textwidth}{
\caption{\footnotesize
Comparison of the numerical solution of Eq.~(\ref{threeNodeMM}) (dashed black)
and the solution of the approximate linear system (not explicitly provided) for
two different sets of $J$ and $\d$. For (a)-(c) $J = 10^{-2}, \d = 0.06$; for
(c)-(f)  $J = 10^{-4}, \d = 0.01$.  The approximate solution changes color when
switching between different subdomains.  Note that the approximate solution is 
discontinuous in general. The reason is that as soon as the solution enters a
new subdomain, the solution jumps (see inset) to the manifold defined by the
algebraic part of the linear differential algebraic system corresponding to the
new subdomain.
}
}
\end{center}
\label{fig:threeDimension}
\end{figure}


\section{General setup}\label{GeneralTheory}
The approximations described in the previous section can be extended to more
general models.  Suppose we describe the evolution of  $n$ interacting elements,
 $u_1,u_2,...,u_n$,  by
\begin{align}\label{ProblemEquation}
	\frac{du_i}{dt} = A_i
\frac{1-u_i}{J_{i}^A+1-u_i}-I_i\frac{u_i}{J_{i}^I+u_i},
\end{align}
where $J_i^A,J_i^I$ are some positive constants.  Here $A_i$ and $I_i$ are
activation/inhibition functions that capture the impact of other variables on
the evolution of $u_i$   The initial conditions are assumed to satisfy $u_i(0)
\in [0,1]$ for all $i$.



 We assume that the activation and inhibition functions are both
affine~[\cite{de02}],
 \begin{equation} \label{activation/Inhibition}
  A_i := \sum_{j=1}^n{w_{ij}^+u_j} +b_i^+,
\quad
   I_i := \sum_{j=1}^n{w_{ij}^-u_j} +b_i^-,
 \end{equation}	
 where  we use the convention $x^+ = \text{max}\{x,0\}$ and $x^- =
\text{max}\{-x,0\}$.
The  $n\times n$ matrix, $W = [w_{ij}]$ and the  $n \times 1$ vector  $b = [\
b_1 \  b_2 \  ...\  b_n\ ]^t$ capture the connectivity and external input to the
network, respectively. In particular,  $w_{ij}$
gives  the contribution  of the $j^{\text{th}}$ variable to the growth rate of
$i^{\text{th}}$ variable. If $w_{ij} > 0 $, then $w_{ij}$  appears in the
activation function for $u_i$; and if $w_{ij}<0$ then $-w_{ij}$  appears in the
inhibition function for $u_i$.
The intensity of the external input to the $i^{\text{th}}$ element is $|b_i|$,
and it contributes to the activation or the inhibition function, depending on
whether $b_i > 0$ or $b_i < 0$,  respectively.

\begin{proposition}\label{prop:invariantCube}
If $A_i$ and $I_i$ are  positive, then the cube $[0,1]^n$ is invariant for the
dynamical system given by Eq.~(\ref{ProblemEquation}).
\end{proposition}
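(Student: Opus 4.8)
The plan is to prove invariance by a flow-transversality argument: show that on the entire boundary of the cube the vector field points inward, so no trajectory starting inside can cross out. First I would check that the right-hand side of Eq.~\eqref{ProblemEquation} defines a vector field that is $C^1$ (hence locally Lipschitz) on an open neighborhood of $[0,1]^n$. Because $J_i^A, J_i^I > 0$, the denominators $J_i^A + 1 - u_i$ and $J_i^I + u_i$ stay bounded away from zero whenever $u_i$ lies in a neighborhood of $[0,1]$, so the field has no singularities near the cube. This guarantees local existence and uniqueness of solutions, which is what makes a first-exit-time argument meaningful.

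Second, I would evaluate the $i$-th component of the field on the two faces $\{u_i = 0\}$ and $\{u_i = 1\}$. On $\{u_i = 0\}$ the inhibition term vanishes and $\frac{du_i}{dt} = A_i/(J_i^A+1)$, which is strictly positive by the hypothesis $A_i > 0$. On $\{u_i = 1\}$ the activation term vanishes and $\frac{du_i}{dt} = -I_i/(J_i^I+1)$, which is strictly negative since $I_i > 0$. Thus on every face of the cube the normal component of the field points strictly into $[0,1]^n$.

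Third, I would convert this into invariance. Suppose, for contradiction, that some solution $u(t)$ with $u(0) \in [0,1]^n$ leaves the cube, and let $t^*$ be the first exit time. At $t^*$ some coordinate attains a boundary value. If $u_i(t^*) = 0$, then exiting requires $u_i(t) < 0$ for $t$ slightly larger than $t^*$, forcing $\dot u_i(t^*) \le 0$; but we computed $\dot u_i(t^*) = A_i/(J_i^A+1) > 0$, a contradiction. The case $u_i(t^*) = 1$ is symmetric, using $\dot u_i(t^*) < 0$. Hence no exit time exists and $[0,1]^n$ is forward invariant. Equivalently, one may simply invoke Nagumo's subtangentiality criterion for the convex set $[0,1]^n$.

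The one place requiring care is the behavior at edges and vertices of the cube, where several coordinates sit on the boundary simultaneously. The point to emphasize is that the inward-pointing condition is established coordinate-by-coordinate, and the sign of $\dot u_i$ at $u_i \in \{0,1\}$ is controlled independently of where the remaining coordinates lie; the first-exit argument therefore localizes to whichever coordinate first reaches the boundary, so lower-dimensional faces present no genuine obstacle. I expect this bookkeeping, rather than any analytic difficulty, to be the only subtle step.
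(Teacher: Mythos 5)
Your proposal is correct and follows essentially the same route as the paper: the core step in both is evaluating the $i$-th component of the vector field on the faces $\{u_i=0\}$ and $\{u_i=1\}$ and observing that positivity of $A_i$ and $I_i$ forces it to point inward. The additional material you supply (local Lipschitz continuity, the first-exit-time contradiction, Nagumo's criterion, and the remark about edges and vertices) merely makes explicit what the paper leaves implicit in the sentence ``it will be enough to show that the vector field at any point on the boundary is directed inward.''
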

\begin{proof}
It will be enough to show that the vector field at any point on the boundary is
directed inward. Since, $A_i$ and $I_i$ are positive, for any $i$,
\begin{align*}
	\frac{du_i}{dt}\bigg|_{u_i = 0} = A_i\frac{1}{J_{i}^A+1} \ge 0,
\quad
\text{and}
\quad
	\frac{du_i}{dt}\bigg|_{u_i = 1} = -I_i\frac{1}{J_{i}^I+1} \le 0.
\end{align*}
\end{proof}

\section{General reduction of the model system}

To obtain a solvable reduction of Eq.~\eqref{ProblemEquation} we follow the 
procedure outlined in  Section~\ref{ExampleProblems}.  We present the result
here, and provide the mathematical justification in the next section.  For
notational convenience we consider the case $J_i^A = J_i^I = J$, with  $J$ small
 and positive.   The general case is equivalent.  Let $\d $ be some positive
number which will be used to define the thickness of the boundary layers, and
which will depend on $J$ in general.  We start with the subdivision of the
$n$-dimensional cube, $[0,1]^n$.

Let $T$ and $S$ be two disjoint subsets of  $\{1,2,...,n\}$, and let
\begin{align*}
\mathcal{R}^{T}_{S}:=
\Big\{
(u_1,u_2,...,u_n) \in [0,1]^n  \,\Big|\,
 u_{s} 		 <  \d \text{ for all }   s \in S; \quad
    &u_{t}	 	> 1-\d\text{ for all }   t \in T; \\
\text{and }
 \d \le  &u_k\le 1-\d
  \text{ for all }  \quad k  \notin S \cup T
\Big \}.
\end{align*}
We extend the convention used in Table~\ref{table:2d}, and in Eqs.~(\ref{R00})
and (\ref{R10}) so that  $\mathcal{R}^{T}_0 :=\mathcal{R}^{T}_{S}$ when $S$ is
empty;  $\mathcal{R}^{0}_S :=\mathcal{R}^{T}_{S}$ when $T$ is empty; and
 $\mathcal{R}^0_0 :=\mathcal{R}^{T}_{S}$  when $T$, $S$ are both  empty.

Within each subdomain $\mathcal{R}_S^T$ Eq.~(\ref{ProblemEquation}) can be
approximated by a
 different linear differential--algebraic system. Following the reduction from
Eq.~(\ref{EquationForToggleSwitch}) to Eq.~(\ref{edgeLinear2d}), for $i \notin S
\cup T$ we obtain the linear system
\begin{subequations}\label{equationInRST}
\begin{equation} \label{four}
 \frac{du_i}{dt} = \sum_{j = 1}^na_{ij}u_j +  b_i.
\end{equation}
For $s \in S$ one of the nonlinear terms remains and we  obtain
\begin{equation} \label{five}
\frac{du_s}{dt} = \left(\sum_{j = 1}^na_{sj}^+u_j +  b_s^+\right)-\left(\sum_{j
= 1}^na_{sj}^-u_j +  b_s^-\right)\frac{u_s}{J+u_s},
\end{equation}
while for $t \in T$ we will have
\begin{equation} \label{six}
\frac{d u_t}{dt} = \left(\sum_{j = 1}^na_{tj}^+u_j + 
b_t^+\right)\frac{1-u_t}{J+1-u_t}
-\left(\sum_{j = 1}^n a_{tj}^-u_j +  b_t^-\right).
\end{equation}
\end{subequations}
Eq.~(\ref{equationInRST}) is simpler than Eq.~(\ref{ProblemEquation}), but it is
not solvable yet.  Following the reduction from Eq.~(\ref{edgeLinear2d}) to
Eq.~(\ref{smallu}),  we now further reduce Eqs.(\ref{five}--\ref{six}).
First we use the approximations $u_s \approx 0$ and $u_t \approx 1$ in the
activation and inhibition functions appearing in Eq.~(\ref{equationInRST}).
Second, we assume
that $u_s$ for $s \in S$ and $u_t$ for $t \in T$ are in steady state.

Under these assumptions we obtain the reduction of Eq.~\eqref{ProblemEquation}
within any subdomain $\mathcal{R}_S^T$
\begin{subequations}\label{mainReduced_RST}
\begin{align}
\frac{du_i}{dt} &=  \sum_{j  \notin S \cup T}a_{ij}u_j + \sum_{j \in T}a_{ij} 
+b_i
& i \notin S \cup T;  \label{mainReduced_RSTa} \\
0 &=  \sum_{j \notin S \cup T}a_{sj}^+u_j + \sum_{t \in T}a_{st}^+
  +b_s^+ -\left( \sum_{j \notin S \cup T }a_{sj}^-u_j + \sum_{t \in T}a_{st}^- +
b_s^-\right)\frac{u_s}{J+u_s};
& s \in S, \label{mainReduced_RSTb} \\
0 &= -\left( \sum_{j \notin S \cup T }a_{tj}^+u_j + \sum_{j \in
T}a_{tj}^++b_t^+\right)\frac{1-u_t}{J+1-u_t}
+\sum_{j \notin S \cup T}a_{tj}^-u_j + \sum_{j \in T}a_{tj}^- + b_t^-,
& t \in T. \label{mainReduced_RSTc}
\end{align}
\end{subequations}

 Eq.~(\ref{mainReduced_RST}) is  solvable since Eq.~(\ref{mainReduced_RSTa}) is
decoupled from the rest, and   Eqs.(\ref{mainReduced_RSTb}) and
(\ref{mainReduced_RSTc}) are solvable for ${u}_s$ and ${u}_t$, respectively, as
functions of the solution of Eq.~(\ref{mainReduced_RSTa}).

\section{Mathematical justification}

We next justify our claim that the variables  close to the boundary
can be assumed to be in steady state. We define the following new variables to
``magnify" the boundary region.
\begin{equation} \label{seven}
 \tilde{u}_s := \frac{u_s}{J}  \quad   \text{ for }  s \in S, \text{ and } 
\qquad
 \tilde{u}_t  := \frac{1-u_t}{J} \,\,\quad \text{ for }  t \in T.
\end{equation}
Using Eq.~ (\ref{seven}) in Eq.~(\ref{equationInRST}) we get for $i \notin S
\cup T$
\begin{subequations}\label{equationInRST_scaled}
\begin{equation} \label{eight}
\frac{du_i}{dt} =  \sum_{j  \notin S \cup T}a_{ij}u_j + \sum_{j \in T}a_{ij}
  + J\left(\sum_{s \in S}a_{is}\tilde{u}_s-\sum_{t \in
T}a_{it}\tilde{u}_t\right) +b_i,
\end{equation}
and for $s \in S$,
\begin{align} \label{nine}
J\frac{d\tilde{u}_s}{dt} = & \sum_{j \notin S \cup T}a_{sj}^+u_j + \sum_{t \in
T}a_{st}^+
  +J\left(\sum_{j \in S}a_{sj}^+\tilde{u}_j-\sum_{t \in
T}a_{st}^+\tilde{u}_t\right)+b_s^+ \nonumber\\
&-\left( \sum_{j \notin S \cup T }a_{sj}^-u_j + \sum_{t \in T}a_{st}^- +
b_s^-\right)\frac{\tilde{u}_s}{1+\tilde{u}_s}
  -J\left(\sum_{j \in S}a_{sj}^+\tilde{u}_j-\sum_{t \in
T}a_{st}^+\tilde{u}_t5\right)\frac{\tilde{u}_s}{1+\tilde{u}_s},
\end{align}
and similarly, for $t \in T$,
\begin{align} \label{ten}
J\frac{d\tilde{u}_t}{dt} = &-\left( \sum_{j \notin S \cup T }a_{tj}^+u_j +
\sum_{j \in T}a_{tj}^++b_t^+\right)\frac{\tilde{u}_t}{1+\tilde{u}_t}
 -J\left(\sum_{s \in S}a_{ts}^+\tilde{u}_s-\sum_{j \in
T}a_{tj}^+\tilde{u}_j\right)\frac{\tilde{u}_t}{1+\tilde{u}_t} \nonumber \\
&+\sum_{j \notin S \cup T}a_{tj}^-u_j + \sum_{j \in T}a_{tj}^- + b_t^-
+J\left(\sum_{s \in S}a_{ts}^+\tilde{u}_s-\sum_{j \in
T}a_{tj}^+\tilde{u}_j\right).
\end{align}
\end{subequations}
%

When $J$ is smallñ, we can apply Geometric Singular Perturbation Theory (GSPT) to Eq. (\ref{equationInRST_scaled}) ~[\cite{Hek2010,Kaper1998}].  The GSPT posits that, under a normal hyperbolicity condition which we will prove below, Eq.~(\ref{equationInRST_scaled}) can be further simplified by assuming that $J = 0$. 
This yields a 
differential-algebraic system
\begin{subequations}\label{mainReduced}
\begin{align}
\frac{du_i}{dt} &=  \sum_{j  \notin S \cup T}a_{ij}u_j + \sum_{j \in T}a_{ij} 
+b_i,
& i \notin S \cup T;  \label{eleven} \\
0 &=  \sum_{j \notin S \cup T}a_{sj}^+u_j + \sum_{t \in T}a_{st}^+
  +b_s^+ -\left( \sum_{j \notin S \cup T }a_{sj}^-u_j + \sum_{t \in T}a_{st}^- +
b_s^-\right)\frac{\tilde{u}_s}{1+\tilde{u}_s},
& s \in S; \label{twelve} \\
0 &= -\left( \sum_{j \notin S \cup T }a_{tj}^+u_j + \sum_{j \in
T}a_{tj}^++b_t^+\right)\frac{\tilde{u}_t}{1+\tilde{u}_t}
+\sum_{j \notin S \cup T}a_{tj}^-u_j + \sum_{j \in T}a_{tj}^- + b_t^-,
& t \in T. \label{thirteen}
\end{align}
\end{subequations}
which is equivalent to Eq.~\eqref{mainReduced_RST} after rescaling.  This
conclusion
will be justified if the manifold defined by Eqs.~(\ref{twelve}) and
(\ref{thirteen}) is normally hyperbolic and
stable~[\cite{Fenichel1979,Kaper1998,Hek2010}].  We verify this condition next.

Let $ \hat{u} =  \{ u_{i_1},...,u_{i_m} \}$ where $\{ i_1,...,i_m \} = \{
1,2,...,n \}\backslash (S \cup T) $, be the coordinates of $u$ which are away
from the boundary, and denote the right hand side of Eq.~(\ref{twelve}) by
$F_s(\hat{u}, \tilde{u}_{i_s})$, for all  $s \in S$, so that
\begin{align*}
 F_s(\hat{u}, \tilde{u}_{i_s}) :=  \sum_{j \notin S \cup T}a_{sj}^+u_j + \sum_{t
\in T}a_{st}^+
  +b_s^+ -\left( \sum_{j \notin S \cup T }a_{sj}^-u_j + \sum_{t \in T}a_{st}^- +
b_s^-\right)\frac{\tilde{u}_s}{1+\tilde{u}_s},
\end{align*}
and
\begin{align*}
 \frac{\partial F_s}{\partial \tilde{u}_{i_s}} =  -\left( \sum_{j \notin S \cup
T }a_{sj}^-u_j + \sum_{t \in T}a_{st}^- + b_s^-\right) \left(
\frac{1}{1+\tilde{u}_s}\right)^2 ,
< 0
\end{align*}
for all $s \in S$.
Similarly, by denoting the right hand side of Eq.~(\ref{thirteen}) by
$G_t(\hat{u}, \tilde{u}_{i_t})$, for all  $t \in T$. \emph{i.e.}
\begin{align*}
 G_t(\hat{u}, \tilde{u}_{i_t}) := -\left( \sum_{j \notin S \cup T }a_{tj}^+u_j +
\sum_{j \in T}a_{tj}^++b_t^+\right)\frac{\tilde{u}_t}{1+\tilde{u}_t}
+\sum_{j \notin S \cup T}a_{tj}^-u_j + \sum_{j \in T}a_{tj}^- + b_t^-,
\end{align*}
we see that
\begin{align*}
 \frac{\partial G_t}{\partial \tilde{u}_{i_t}} = -\left( \sum_{j \notin S \cup T
}a_{tj}^+u_j + \sum_{j \in T}a_{tj}^++b_t^+\right) \left(
\frac{\tilde{u}_t}{1+\tilde{u}_t} \right)^2
< 0.
\end{align*}
Hence, the manifold defined by Eqs.~(\ref{twelve}) and (\ref{thirteen}) is
normally hyperbolic and stable.  This completes the proof that the reduction of
the non-linear system~(\ref{ProblemEquation}) to a solvable
system~(\ref{mainReduced_RST}) is justified for small $J$.



\section{Discussion}\label{Discussion}
 A special class of non-linear differential equation was studied with non-linear interaction
terms given by Hill functions. We showed that when the Michaelis-Menten
constants are sufficiently small, the behavior of the  system is captured by 
an approximate piecewise linear systems.  This induces a natural
decomposition of the domain into a nested sequence of hypercubes, with a separate
linear--algebraic system giving an approximation in each subdomain.  We have illustrated
the theory in examples, and justified the conclusions using GSPT. 

A potential limitation in our arguments is that we have an approximation 
valid only in an asymptotic limit.  It is unknown when and how the approximation breaks down.
 Another major limitation of our analysis is
that we have not provided a systematic relationship between the thickness of the
boundary, $\d$, and the Michaelis-Menten constant, $J$. Numerical tests suggest that $J = \mathcal{O}(\d^2)$. 





\bibliographystyle{plainnat}

\end{document}